\newcommand{\var}{\mathsf{var}}
\newcommand{\size}{\mathsf{size}}
\newcommand{\MC}{\mathsf{MC}}
\newcommand{\HW}{\mathsf{HW}}
\newcommand{\CE}{\mathsf{CE}}
\newcommand{\decDNNF}{\mathsf{dec}\textsf{-}\mathsf{DNNF}}
\newcommand{\kcps}{\mathsf{kcps}}
\newcommand{\UNSAT}{\mathsf{UNSAT}}
\renewcommand{\SAT}[1][]{\ifthenelse{\isempty{#1}}{}{#1\textsf{-}}\mathsf{SAT}}
\newcommand{\sSAT}[1][]{\#\SAT[#1]}
\newcommand{\mSAT}[1][]{\mathsf{Max}\SAT[#1]}
\newtheorem{observation}{Observation}
\begin{document}
\title{Knowledge compilation languages as proof systems}
%
%
\author{Florent Capelli\inst{1}\orcidID{0000-0002-2842-8223}}
\authorrunning{F. Capelli}
\institute{Université de Lille, Inria, UMR 9189 - CRIStAL - Centre de Recherche en Informatique Signal et Automatique de Lille, F-59000 Lille, France  \\
\email{florent.capelli@univ-lille.fr}}
\maketitle              
\begin{abstract}
  In this paper, we study proof systems in the sense of Cook-Reckhow for
  problems that are higher in the polynomial hierarchy than coNP, in particular,
  \#SAT and maxSAT. We start by explaining how the notion of Cook-Reckhow proof
  systems can be apply to these problems and show how one can twist existing
  languages in knowledge compilation such as decision DNNF so that they can be
  seen as proof systems for problems such as \#SAT and maxSAT.

\keywords{Knowledge compilation  \and Proof complexity \and Propositional Model
  Counting \and maxSAT.}
\end{abstract}

\section{Introduction}
\label{sec:intro}

Proof complexity studies the hardness of finding a certificate that a CNF
formula is not satisfiable. A minimal requirement for such a certificate is that
it should be checkable in polynomial time in its size, so that it is easier for
an independent checker to assess the correctness of the proof than to redo the
computation made by a solver. While proof systems have been implicitly used for
a long time starting with resolution~\cite{DavisP60,DavisPLL62}, their
systematic study has been initiated by Cook and Reckhow~\cite{cook1979relative}
who showed that unless $\NP=\coNP$, one cannot design a proof system where all
unsatisfiable CNF have short certificates. Nevertheless, many unsatisfiable CNF
may have short certificates if the proof system is powerful enough, motivating
the study of how such systems, such as resolution~\cite{DavisPLL62} or
polynomial calculus~\cite{Clegg96}, compares in terms of succinctness
(see~\cite{nordstrom2013pebble} for a survey). More recently, proof sytems found
practical applications as SAT solvers are expected -- since 2013 -- to output
proof of unsatisfiability in SAT competitions to avoid implementation bugs.

While the proof systems implicitly defined by the execution trace of modern CDCL
SAT solvers is fairly well understood~\cite{pipatsrisawat2011power}, it is not
the case for tools solving harder problems on CNF formulas such as $\sSAT$ and
$\mSAT$. For $\mSAT$, a resolution-like system for $\mSAT$ has been proposed by
Bonet et al.~\cite{BonetLM07} for which a compressed version has been used in a
solver by Bacchus and Narodytska~\cite{narodytska2014} but it is to the best of
our knowledge the only such proof system. To the best of our knowledge, no proof
system has been proposed for $\sSAT$.

In this short paper, we introduce new proof systems for $\sSAT$ and $\mSAT$.
Contrary to the majority of proof systems for $\SAT$, our proof systems are not
based on the iterative application of inference rules on the original CNF
formula. In our proof systems, our certificates are restricted Boolean circuits
representing the Boolean function computed by the input CNF formula. These
restricted circuits originate from the field of knowledge
compilation~\cite{DarwicheM2002}, whose primary focus is to study the
succinctness and tractability of representations such as Read Once Branching
Programs~\cite{Wegener00} or deterministic DNNF~\cite{Darwiche01MC} and how CNF
formula can be transformed into such representations. To use them as
certificates for $\sSAT$, we first have to add some extra information in the
circuit so that one can check in polynomial time that they are equivalent to the
original CNF. The syntactic properties of the input circuits then allow to
efficiently count the number of satisfying assignments, resulting in the desired
proof system. Moreover, we observe that most tools doing exact model counting
are already implicitly generating such proofs. Our result generalizes known
connections between regular resolution and Read Once Branching Programs
(see~\cite[Section 18.2]{Jukna12}).

The paper is organized as follows. Section~\ref{sec:preliminaries} introduces
all the notions that will be used in the paper. Section~\ref{sec:kcpc} contains
the definition of certified $\decDNNF$ that allows us to define our proof
systems for $\sSAT$ and $\mSAT$.

  

\section{Preliminaries}
\label{sec:preliminaries}

\paragraph{Assignments and Boolean functions.} Let $X$ be a finite set of
variables and $D$ a finite domain. We denote the set of functions from $X$ to
$D$ as $D^X$. An \emph{assignment on variables $X$} is an element of
$\{0,1\}^X$. A \emph{Boolean function $f$ on variables $X$} is an element of
$\{0,1\}^{\{0,1\}^X}$, that is, a function that maps an assignment to a value in
$\{0,1\}$. An assignment $\tau \in \{0,1\}^X$ such that $f(\tau) = 1$ is called
a \emph{satisfying assignment} of $f$, denoted by $\tau \models f$. We denote by
$\bot_X$ the Boolean function on variables $X$ whose value is always $0$. Given
two Boolean functions $f$ and $g$ on variables $X$, we write $f \Rightarrow
g$ if for every $\tau$, $f(\tau) \leq g(\tau)$.

\paragraph{CNF.} Let $X$ be a set of variable. A \emph{literal} on variable $X$
is either a variable $x \in X$ or the negation $\neg x$ of a variable $x \in X$.
A \emph{clause} is a disjunction of literals. A \emph{conjunctive normal form
  formula}, CNF for short, is a conjunction of clauses. A CNF naturally defines a Boolean
function on variables $X$: a \emph{satisfying assignment} for a CNF $F$ on
variable $X$ is an assignment $\tau \in \{0,1\}^X$ such that for every
clause $C$ of $F$, there exists a literal $\ell$ of $C$ such that $\tau(\ell) =
1$ (where we define $\tau(\neg x) := 1-\tau(x)$). We often identify a CNF with
the Boolean function it defines.

The problem $\SAT$ is the problem of deciding, given a CNF formula $F$, whether
$F$ has a satisfying assignment. It is the generic $\NP$-complete
problem~\cite{cook1971complexity}. The problem $\UNSAT$ is the problem of
deciding, given a CNF formula $F$, whether $F$ does not have a satisfying
assignment. It is the generic $\coNP$-complete problem.

Given a CNF $F$, we denote by $\#F = |\{\tau \mid \tau \models F\}|$ the number
of solutions of $F$ and by $M(F) = \max_{\tau} |\{C \in F \mid \tau \models
C\}|$ the maximum number of clauses of $F$ that can be simultaneously satisfied.
The problem $\sSAT$ is the problem of computing $\#F$ given a CNF $F$ as input
and the problem $\mSAT$ is the problem of computing $M(F)$ given a CNF $F$ as
input.

\paragraph{Cook-Reckhow proof systems.} Let $\Sigma, \Sigma'$ be finite
alphabets. A \emph{(Cook-Reckhow) proof system}~\cite{cook1979relative} for a
language $L \subseteq \Sigma^*$ is a surjective polynomial time computable
function $f : \Sigma' \rightarrow L$. Given $a \in L$, there exists, by
definition, $b \in \Sigma'$ such that $f(b) = a$. We will refer to $b$ as being
\emph{a certificate of $a$}.


In this paper, we will mainly be interested in proof systems for the problems
$\sSAT$ and $\mSAT$, that is, we would like to design polynomial time verifiable
proofs that a CNF formula has $k$ solutions or that at most $k$ clauses in the
formula can be simultaneously satisfied. For the definition of Cook-Reckhow,
this could translate to finding a proof system for the languages $\{(F, \#F)
\mid F \text{ is a CNF}\}$ and $\{(F, M(F)) \mid F \text{ is a CNF}\}$.

For example, a naive proof system for $\sSAT$ could be the following: a
certificate that $F$ has $k$ solutions would be the list of the $k$ solutions
together with a resolution proof that $F'=F \wedge \bigwedge_{\tau \mid \tau
  \models F} C_\tau$ is not satisfiable where $C_\tau := \bigvee_{\{x \mid
  \tau(x) = 0\}} x \vee \bigvee_{\{\neg x \mid \tau(x) = 1\}} \neg x$ is the
clause such that the only non-satisfying assignment is $\tau$. One could then
check in polynomial time that each of the $k$ assignments satisfies $F$ and that
$F'$ is indeed unsatisfiable and then output $(F,k)$. This proof system is
however not very interesting as one can construct very simple CNF with
exponentially many solutions: for example the empty CNF on $n$ variables has
$2^n$ and will thus have a certificate of size at least $2^n$.

\paragraph{$\decDNNF$.} A \emph{decision Decomposable Negation Normal Form
  circuit} $D$ on variables $X$, $\decDNNF$ for short, is a directed acyclic
graph (DAG) having exactly one node of indegree $0$ called the \emph{source}.
Nodes of outdegree $0$ are called the \emph{sinks} and are labeled by $0$ or
$1$. The other nodes have outdegree $2$ and can be of two types:
\begin{itemize}
\item The \emph{decision nodes} are labeled with a variable $x \in X$. One
  outgoing edge is labeled with $1$ and the other by $0$, represented
  respectively as a solid and a dashed edge in our figures.
\item The \emph{$\wedge$-nodes} are labeled with $\wedge$.
\end{itemize}
Moreover, we have two other syntactic properties. We introduce a few notations
before explaining them. If there is a decision node in $D$ labeled with variable
$x$, we say that $x$ is \emph{tested} in $D$. We denote by $\var(D)$ the set of
variables tested in $D$. Given a node $\alpha$ of $D$, we denote by $D(\alpha)$
the $\decDNNF$ whose source is $\alpha$ and nodes are the nodes that can be
reached in $D$ starting from $\alpha$. We also assume the following:
\begin{itemize}
\item Every $x \in X$ is tested at most once on every source-sink path of $D$.
\item Every $\wedge$-gate of $D$ are \emph{decomposable}, that is, for every
  $\wedge$-node $\alpha$ with successors $\beta,\gamma$ in $D$, it holds that
  $\var(D(\beta)) \cap \var(D(\gamma)) = \emptyset$.
\end{itemize}
\begin{figure}
  \includegraphics{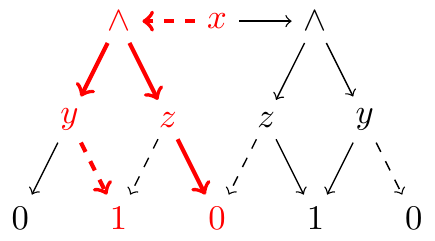}
  \centering
  \caption{A $\decDNNF$ computing $x = y = z$.}
  \label{fig:afbdd}
\end{figure}

Let $\tau \in \{0,1\}^X$. A source-sink path $P$ in $D$ is \emph{compatible}
with $\tau$ if and only if when $x$ is tested on $P$, the outgoing edge labeled
with $\tau(x)$ is in $P$. We say that $\tau$ satisfies $D$ if only $1$-sinks are
reached by paths compatible with $\tau$. A $\decDNNF$ and the paths compatible
with the assignment $\tau(x) = \tau(y) = 0, \tau(z) = 1$ are depicted in bold
red on Figure~\ref{fig:afbdd}. Observe that a $0$-sink is reached so $\tau$ does
not satisfy $D$. We will often identify a $\decDNNF$ with the Boolean function
it computes.

\begin{observation}
  \label{obs:patharecompatible} Given a $\decDNNF$ $D$ on variables $X$ and a
  source-sink path $P$ in $D$, there exists $\tau \in \{0,1\}^X$ such that $P$
  is compatible with $\tau$. Indeed, by definition, every variable $x \in X$ is
  tested at most once in $P$, thus, if $x$ is tested on $P$ in a decision node
  $\alpha$ and $P$ contains the outgoing edge labeled with $v_x$, we can choose
  $\tau(x) := v_x$. The value of $\tau$ for a variable $x$ not tested on $P$ can
  be chosen arbitrarily.
\end{observation}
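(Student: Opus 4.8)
The plan is to construct the desired assignment $\tau$ directly by reading the decision edges encountered along $P$. First I would traverse $P$ from the source to the sink: whenever $P$ passes through a decision node $\alpha$ labeled by a variable $x$ and leaves $\alpha$ along the edge labeled $v_x \in \{0,1\}$, I set $\tau(x) := v_x$. This rule could a priori be ambiguous if the same variable were assigned two different values at two distinct decision nodes of $P$, but the first structural property of a $\decDNNF$ — that every $x \in X$ is tested at most once on every source-sink path — guarantees that each variable receives at most one value this way, so $\tau$ is well defined on the set of variables tested on $P$. For any remaining variable $x \in X$ (a variable not tested on $P$, in particular one occurring only at $\wedge$-nodes or not at all in $D$) I would simply set $\tau(x) := 0$; the choice is irrelevant.

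It then remains to check that $P$ is compatible with this $\tau$, i.e.\ that for every decision node of $P$ testing a variable $x$, the outgoing edge of $P$ at that node is precisely the one labeled $\tau(x)$. This holds by construction: $\tau(x)$ was defined to be exactly the label $v_x$ of that edge, and since $x$ is tested only once on $P$ no later decision node of $P$ can have overwritten this value. The $\wedge$-nodes impose no constraint in the definition of compatibility, so there is nothing further to verify and the observation follows.

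I do not expect any genuine obstacle here; the only point requiring care is the well-definedness of $\tau$, which is exactly where the "tested at most once per source-sink path" assumption is used. Without that hypothesis the statement would fail, since a single path could force $x = 0$ and $x = 1$ simultaneously, so it is worth flagging this as the one place the $\decDNNF$ axioms are invoked.
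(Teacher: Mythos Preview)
Your proposal is correct and follows essentially the same approach as the paper: the observation's own justification constructs $\tau$ by setting $\tau(x) := v_x$ for each variable $x$ tested on $P$ (well-defined because each variable is tested at most once along $P$) and assigns the remaining variables arbitrarily. Your write-up adds a bit more detail in explicitly verifying compatibility and noting that $\wedge$-nodes impose no constraint, but the argument is the same.
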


The size of a $\decDNNF$ $D$, denoted by $\size(D)$ is the number of edges of the
underlying graph of $D$.


\paragraph{Tractable queries.} The main advantage of representing a Boolean
function with a $\decDNNF$ is that it makes the analysis of the function easier.
Given a $\decDNNF$, one can easily find a satisfying assignment by only following
paths backward from $1$-sinks. Similarly, one can also count the number of
satisfying assignments or find one satisfying assignment with the least number
of variables set to $1$ etc. The relation between the queries that can be solved
efficiently and the representation of the Boolean function has been one focus of
Knowledge Compilation. See~\cite{DarwicheM2002} for an exhaustive study of
tractable queries depending on the representation. Let $f : 2^X \rightarrow
\{0,1\}$ be a Boolean function. In this paper, we will mainly be interested in
solving the following problems:
\begin{itemize}
\item Model Counting Problem ($\MC$): return the number of satisfying assignment
  of $f$.
\item Clause entailment ($\CE$): given a clause $C$ on variables $X$, does $f
  \Rightarrow C$?
\item Maximal Hamming Weight ($\HW$): given $Y \subseteq X$, compute \[\max_{\tau
    \models f} |\{y \in Y \mid \tau(y)=1\}|.\]
\end{itemize}
All these problems are tractable when the Boolean function is given as a $\decDNNF$:

\begin{theorem}[\cite{Darwiche01MC,KoricheBLM16}]
  \label{thm:count_afbdd}
  Given a $\decDNNF$ $D$, one can solve problems $\MC, \CE, \HW$ on the Boolean
  function represented by $D$ in linear time in $\size(D)$.
\end{theorem}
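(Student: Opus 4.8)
\emph{Proof idea.} The plan is to solve all three problems with a single bottom-up dynamic programming pass over the DAG underlying $D$, visiting the nodes in reverse topological order so that every node is treated after its two out-neighbours. At each node $\alpha$ I would store a constant-size piece of data describing $D(\alpha)$, chosen so that the update rule at an $\wedge$-node is a trivial combination of the data of its two children --- this is where decomposability is used --- and the update at a decision node is equally simple because no variable is tested twice on a path. Each node is then processed once with $O(1)$ operations per incident edge, giving linear time in $\size(D)$, where arithmetic on the $O(|X|)$-bit numbers involved is counted as unit cost.

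For $\MC$, I would compute at each node $\alpha$ the normalized count $p(\alpha):=\#D(\alpha)/2^{|\var(D(\alpha))|}$, i.e.\ the probability that a uniformly random assignment of $\var(D(\alpha))$ satisfies $D(\alpha)$. A $1$-sink gets $1$, a $0$-sink gets $0$. A decision node $\alpha$ on $x$ with $0$-child $\beta_0$ and $1$-child $\beta_1$ gets $p(\alpha):=\tfrac12\bigl(p(\beta_0)+p(\beta_1)\bigr)$; this is correct because $x\notin\var(D(\beta_i))$ (otherwise a source--sink path would test $x$ twice), so one may condition on $\tau(x)$, the remaining untested variables being free (Observation~\ref{obs:patharecompatible}). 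An $\wedge$-node $\alpha$ with children $\beta,\gamma$ gets $p(\alpha):=p(\beta)\cdot p(\gamma)$: by the path semantics $\tau\models D(\alpha)$ iff $\tau\models D(\beta)$ and $\tau\models D(\gamma)$, and by decomposability these two events depend on disjoint sets of variables, hence are independent. The output is $\#D=2^{|X|}\cdot p(\text{source})$, since the variables of $X$ not tested in $D$ are unconstrained.

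For $\HW$ with target set $Y\subseteq X$, I would instead compute $m(\alpha):=\bigl(\max_{\tau\models D(\alpha)}|\{y\in Y\cap\var(D(\alpha))\mid\tau(y)=1\}|\bigr)-|Y\cap\var(D(\alpha))|$, with the convention $m(\alpha)=-\infty$ when $D(\alpha)$ is unsatisfiable. A $1$-sink gets $0$, a $0$-sink gets $-\infty$, an $\wedge$-node gets $m(\alpha):=m(\beta)+m(\gamma)$ since the maximum splits along the decomposition, and a decision node on $x$ gets $m(\alpha):=\max\bigl(m(\beta_0),m(\beta_1)\bigr)$. The last rule is the least obvious one and the step I would spell out in detail: on the branch $\tau(x)=1$ the variable $x$ contributes nothing and the $Y$-variables tested in $D(\beta_0)$ but not in $D(\beta_1)$ can all be set to $1$ for free, and these contributions exactly compensate the change of normalization between $\var(D(\alpha))$ and $\var(D(\beta_1))$, so that branch is worth precisely $m(\beta_1)$; the other branch is symmetric. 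The output is $m(\text{source})+|Y|$.

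For $\CE$ on a clause $C$, I would use that $D\Rightarrow C$ fails exactly when $D$ has a model falsifying every literal of $C$: if $C$ is tautological the answer is ``yes'', and otherwise, writing $\rho$ for the partial assignment falsifying all literals of $C$, I would compute $s(\alpha)\in\{0,1\}$ equal to $1$ iff $D(\alpha)$ has a satisfying assignment consistent with $\rho$ --- a $1$-sink gets $1$, a $0$-sink gets $0$, an $\wedge$-node gets $s(\alpha):=s(\beta)\wedge s(\gamma)$ (two models on disjoint variable sets both agreeing with $\rho$ can be glued), and a decision node on $x$ gets $s(\alpha):=s(\beta_{\rho(x)})$ when $\rho(x)$ is defined and $s(\alpha):=s(\beta_0)\vee s(\beta_1)$ otherwise; a variable on which $\rho$ is defined but which is not tested in $D$ simply never intervenes, which is exactly what one wants. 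Then $D\Rightarrow C$ iff $s(\text{source})=0$. The parts I expect to require the most care are the correctness of the three $\wedge$-node rules, which relies entirely on decomposability --- without it the product, the sum and the conjunction are all wrong --- and the handling of decision nodes whose two out-neighbours test different variables, which is precisely why I track the relative quantities $p$ and $m$ rather than the raw count and the raw maximum, so that the contributions of the untested variables cancel in the recursion.
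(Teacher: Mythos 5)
The paper gives no proof of this theorem (it is cited from the knowledge compilation literature), and your single bottom-up pass in reverse topological order is the standard argument. Your $\MC$ and $\CE$ recursions are correct as written, including the probability normalization at decision nodes and the use of decomposability to justify the product and the conjunction at $\wedge$-nodes.

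Your $\HW$ recursion, however, has a genuine error at decision nodes, and it sits exactly in the phrase ``the other branch is symmetric.'' The two branches are not symmetric. With your normalization $m(\alpha)=M(\alpha)-|Y\cap\var(D(\alpha))|$, where $M(\alpha)$ is the raw maximum, the tested variable $x$ is charged in the normalizer $|Y\cap\var(D(\alpha))|$ on \emph{both} branches whenever $x\in Y$, while its raw contribution to $M$ is $1$ on the branch $\tau(x)=1$ and $0$ on the branch $\tau(x)=0$. Redoing your own computation, the $1$-branch is indeed worth $m(\beta_1)$, but the $0$-branch is worth $m(\beta_0)-1$ when $x\in Y$ (and $m(\beta_0)$ otherwise). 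A one-node counterexample: let $D$ be a single decision node on $x$ whose $1$-edge goes to a $0$-sink and whose $0$-edge goes to a $1$-sink, so that $D$ computes $\neg x$, and take $Y=\{x\}$. The correct value of $\HW$ is $0$, but your rule gives $m(\mathrm{source})=\max(0,-\infty)=0$ and outputs $m(\mathrm{source})+|Y|=1$. The fix is local: set $m(\alpha):=\max\bigl(m(\beta_0)-[x\in Y],\,m(\beta_1)\bigr)$, where $[x\in Y]$ equals $1$ if $x\in Y$ and $0$ otherwise; with this correction the sink and $\wedge$-node rules and the final offset $+|Y|$ all go through, and the linear running time is unaffected.
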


The tractability of $\CE$ on $\decDNNF$ has the following useful consequence:
\begin{corollary}
  \label{cor:cnfentailement} Given a $\decDNNF$ $D$ and a CNF formula $F$, one
  can check in time $O(\size(F) \times \size(D))$ whether $D \Rightarrow F$.
\end{corollary}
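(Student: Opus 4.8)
The plan is to reduce the question "$D \Rightarrow F$" to a polynomial number of clause-entailment queries, one per clause of $F$, and then invoke Theorem~\ref{thm:count_afbdd} on each. The key observation is that $F$ is a conjunction of clauses $C_1 \wedge \dots \wedge C_m$, so $D \Rightarrow F$ holds if and only if $D \Rightarrow C_i$ for every $i$: indeed, for a fixed assignment $\tau$, $F(\tau) = 1$ exactly when $C_i(\tau) = 1$ for all $i$, hence $D(\tau) \leq F(\tau)$ for all $\tau$ is equivalent to $D(\tau) \leq C_i(\tau)$ for all $\tau$ and all $i$. So the first step is to state and justify this equivalence, which is immediate from the definition of $\Rightarrow$ given in the preliminaries.

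Next I would handle one bookkeeping subtlety: the clauses $C_i$ of $F$ are clauses on the variable set of $F$, whereas $\CE$ as stated takes a clause on the variable set $X$ of the $\decDNNF$ $D$. If $\var(D)$ and the variables of $F$ do not coincide one has to be slightly careful about what $D \Rightarrow F$ even means; the natural reading is that both are viewed as Boolean functions on the common variable set $X$ (padding with irrelevant variables as needed), and a clause $C_i$ of $F$ is then also a clause on $X$. With that convention, applying Theorem~\ref{thm:count_afbdd} to $D$ and each $C_i$ decides $D \Rightarrow C_i$ in time $O(\size(D))$.

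Finally I would assemble the complexity bound: there are at most $\size(F)$ clauses in $F$ (in fact the number of clauses is at most the encoding size of $F$), and each $\CE$ query costs $O(\size(D))$ by Theorem~\ref{thm:count_afbdd}, so the total running time is $O(\size(F) \times \size(D))$; the algorithm answers "yes" iff all $m$ queries answer "yes".

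I do not expect a genuine obstacle here — the statement is essentially a direct corollary. The only point requiring a little care is the first step's equivalence together with the variable-set convention, i.e. making sure that decomposing $F$ into its clauses really is sound when $D$ and $F$ are compared as functions on the same variable set; once that is pinned down the rest is just summing the per-clause costs from Theorem~\ref{thm:count_afbdd}.
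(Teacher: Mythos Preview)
Your proposal is correct and follows exactly the paper's approach: decompose $F$ into its clauses and apply the clause-entailment tractability from Theorem~\ref{thm:count_afbdd} to each, then sum the costs. The paper's own proof is a one-liner saying precisely this; your additional remarks on the variable-set convention and the counting of clauses are sound elaborations but not required for the argument.
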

\begin{proof}
  One simply has to check that for every clause $C$ of $F$, $D \Rightarrow C$,
  which can be done in polynomial time by Theorem~\ref{thm:count_afbdd}. 
\end{proof}

\section{Knowledge compilation based proof systems}
\label{sec:kcpc}

Theorem~\ref{thm:count_afbdd} suggests that given a CNF $F$, one could use a
$\decDNNF$ $D$ computing $F$ as a certificate for $\sSAT$. The proof system
could then check the certificate as follows:
\begin{enumerate}
\item\label{enu:count} Compute the number $k$ of satisfying assignments of $D$.
\item\label{enu:equiv} Check whether $F$ is equivalent to $D$.
\item If so, return $(F,k)$. 
\end{enumerate}
While Step~\ref{enu:count} can be done in polynomial time by
Theorem~\ref{thm:count_afbdd}, it turns out that Step~\ref{enu:equiv} is not
tractable:
\begin{theorem}
  \label{thm:equiv_intrac} The problem of checking, given a CNF $F$ and an
  $\decDNNF$ $D$ as input, whether $F \Rightarrow D$ is $\coNP$-complete.
\end{theorem}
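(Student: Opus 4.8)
The plan is to prove membership in $\coNP$ and $\coNP$-hardness separately, the hardness coming from an essentially trivial reduction from $\UNSAT$.

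For membership, I would show that the complement problem --- deciding whether $F \not\Rightarrow D$, i.e. whether there is some $\tau$ with $F(\tau) = 1$ and $D(\tau) = 0$ --- lies in $\NP$. A witness is such an assignment $\tau \in \{0,1\}^X$, of size $|X|$. Checking $F(\tau) = 1$ is done clause by clause in time $O(\size(F))$. Checking $D(\tau) = 0$ amounts to evaluating the $\decDNNF$ on $\tau$, which can be done in time $O(\size(D))$ by a single bottom-up pass: a $1$-sink (resp.\ $0$-sink) gets value $1$ (resp.\ $0$), a decision node labeled $x$ gets the value of the child reached by its outgoing edge labeled $\tau(x)$, and an $\wedge$-node gets the conjunction of the values of its two children. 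An easy induction on the DAG shows that the value computed at a node $\alpha$ equals $1$ iff $\tau \models D(\alpha)$: for $\wedge$-nodes this uses that an $\alpha$-to-sink path is a $\beta$-to-sink or $\gamma$-to-sink path prefixed by a non-decision edge, and for decision nodes on $x$ it uses that every source--sink path compatible with $\tau$ passing through $\alpha$ must take the $\tau(x)$-edge. Hence the complement is in $\NP$, so the problem is in $\coNP$.

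For hardness, I would reduce from $\UNSAT$. Given a CNF $\varphi$ on variables $X$, let $D_0$ be the $\decDNNF$ consisting of a single node, which is both source and sink, labeled $0$; this is a legal $\decDNNF$ (the two syntactic conditions hold vacuously) and it computes $\bot_X$. The reduction outputs the pair $(\varphi, D_0)$. By definition, $\varphi \Rightarrow D_0$ iff $\varphi \Rightarrow \bot_X$ iff $\varphi$ has no satisfying assignment iff $\varphi \in \UNSAT$. The map $\varphi \mapsto (\varphi, D_0)$ is computable in linear time, so the problem is $\coNP$-hard, and together with membership it is $\coNP$-complete. (The same construction, read as a reduction from $\SAT$, shows directly that the complement problem is $\NP$-hard.)

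I do not expect a genuine obstacle: the only two points requiring care are both elementary --- that a single assignment can be tested against a $\decDNNF$ in polynomial time (needed for $\coNP$ membership; this is also implicit in Theorem~\ref{thm:count_afbdd}), and that the constant function $\bot_X$ is expressible as a trivial $\decDNNF$ (needed for the reduction). It is worth emphasizing that the hardness has nothing to do with the structure of $D$: it already shows up when $D$ is as simple as possible. This is precisely why a $\decDNNF$ computing $F$ cannot be used as a certificate for $\sSAT$ via a naive equivalence check, motivating the extra annotation introduced next.
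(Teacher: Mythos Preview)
Your proposal is correct and takes essentially the same approach as the paper: the hardness reduction from $\UNSAT$ via the single $0$-sink $\decDNNF$ computing $\bot_X$ is identical to the paper's argument. You simply flesh out the $\coNP$ membership (which the paper dismisses in one sentence as ``clearly in $\coNP$'') with an explicit witness-checking algorithm.
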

\begin{proof}
  The problem is clearly in $\coNP$. 
%
  For completeness, there is a straightforward reduction to $\UNSAT$. Indeed,
  observe that a CNF $F$ on variables $X$ is not satisfiable if and only if $F
  \Rightarrow \bot_X$. Moreover, $\bot_X$ is easily represented as a $\decDNNF$
  having only one node: a $0$-labeled sink.
\end{proof}




\subsection{Certified $\decDNNF$}
\label{sec:certdnnf}

The reduction used in the proof of Theorem~\ref{thm:equiv_intrac} suggests that
the $\coNP$-completeness of checking whether $F \Rightarrow D$ comes from the
fact that $\decDNNF$ can succinctly represent $\bot$. In this section, we
introduce restrictions of $\decDNNF$ called certified $\decDNNF$ for which one
can check whether a CNF formula entails the certified $\decDNNF$. The idea is to
add information on $0$-sink to explain which clause would be violated by an
assignment leading to this sink.

Our inspiration comes from a known connection between regular resolution and
read once branching programs (\emph{i.e.} a $\decDNNF$ without
$\wedge$-gate~\cite{BeameLRS13}) that appears to be folklore but we refer the
reader to the book by Jukna~\cite[Section 18.2]{Jukna12} for a thorough and
complete presentation. It turns out that a regular resolution\footnote{A regular
  resolution proof is a resolution proof where, on each path, a variable is
  resolved at most once.} proof of unsatisfiability of a CNF $F$ can be
represented by a read once branching program $D$ whose sinks are labeled with
clauses of $F$. Moreover, for every $\tau$, if a sink labeled by a clause $C$ is
reached by a path compatible with $\tau$, then $C(\tau) = 0$. We generalize this
idea so that the function represented by a $\decDNNF$ is not only an
unsatisfiable CNF:

\begin{definition}
  \label{def:cafbdd}
  A \emph{certified $\decDNNF$ $D$ on variables $X$} is a $\decDNNF$ on variables
  $X$ such that every $0$-sink $\alpha$ of $D$ is labeled with a clause
  $C_\alpha$. $D$ is said to be \emph{correct} if for every $\tau \in \{0,1\}^X$
  such that there is a path from the source of $D$ to a $0$-sink $\alpha$
  compatible with $\tau$, $C_\alpha(\tau) = 0$.
\end{definition}

Given a certified $\decDNNF$, we denote by $Z(D)$ the set of $0$-sinks of $D$ and
by $F(D) = \bigwedge_{\alpha \in Z(D)} C_\alpha$.

Intuitively, the clause labeling a $0$-sink is an explanation on why one
assignment does not satisfy the circuit. The degenerated case where there are
only $0$-sinks and no $\wedge$-gates corresponds to the characterization of
regular resolution. 

A crucial property of certified $\decDNNF$ is that their correctness can be tested
in polynomial time:

\begin{theorem}
  \label{thm:check_correctness}
  Given a certified $\decDNNF$ $D$, one can check in polynomial time whether $D$
  is correct.
\end{theorem}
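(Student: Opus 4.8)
The plan is to reduce checking correctness to polynomially many reachability queries in the underlying DAG of $D$. Unfolding the definition, $D$ fails to be correct exactly when there is a $0$-sink $\alpha$, an assignment $\tau$, and a path $P$ from the source to $\alpha$ compatible with $\tau$ such that $C_\alpha(\tau) = 1$, i.e.\ $\tau$ satisfies some literal $\ell \in C_\alpha$. The key observation is that this last possibility can be analysed path-by-path and literal-by-literal: fix a path $P$ ending at a $0$-sink $\alpha$ and a literal $\ell \in C_\alpha$; let $x$ be the variable of $\ell$ and $v \in \{0,1\}$ the value of $x$ that makes $\ell$ true. Then some $\tau$ compatible with $P$ satisfies $\ell$ if and only if $P$ does not already force $\ell$ to be false; by Observation~\ref{obs:patharecompatible} (a compatible $\tau$ must obey only the edge labels along $P$), this happens precisely when either $x$ is not tested on $P$, or $x$ is tested on $P$ by taking the outgoing edge labelled $v$.

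Granting this, $D$ is incorrect if and only if there is a $0$-sink $\alpha$ and a literal $\ell \in C_\alpha$ (with variable $x$ and true value $v$ as above) such that some path from the source to $\alpha$ avoids every outgoing edge labelled $1-v$ of a decision node on $x$. So the procedure is: for each such pair $(\alpha,\ell)$, build the directed graph $D'$ obtained from the underlying DAG of $D$ by deleting, from every decision node labelled $x$, its outgoing edge labelled $1-v$, and test whether $\alpha$ is reachable from the source in $D'$. A path from the source to $\alpha$ in $D'$ is exactly a path of $D$ along which $\ell$ is not forced false at $\alpha$ (on it $x$ is either untested or tested on the $v$-edge, using that each variable is tested at most once per path), so $D$ is correct if and only if no pair $(\alpha,\ell)$ passes this reachability test. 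For the running time, there is one linear-time graph traversal per pair consisting of a $0$-sink and one literal of the clause labelling it, i.e.\ at most $\sum_{\alpha \in Z(D)} |C_\alpha|$ traversals, each on a graph no larger than $D$, plus linear time to build each $D'$; the whole test is thus polynomial in the size of the input (the clause labels being part of the input).

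The hard part is getting the case analysis in the key observation exactly right --- in particular, not overlooking that a path may contain no decision node on $x$ at all and still witness incorrectness, so it does not suffice to reason only about which edge is taken at an $x$-node. Treating the cases where $x$ is untested and where $x$ is tested on its $v$-edge uniformly is exactly what the edge-deletion step accomplishes, and it is what collapses the problem to plain reachability. Two easy points to verify along the way: unreachable $0$-sinks cause no spurious failure, since $D'$ is a subgraph of $D$; and degenerate clause labels (such as the empty clause, which is always falsified) are handled correctly and vacuously, as they contribute no literals to iterate over.
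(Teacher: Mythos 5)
Your proof is correct and takes essentially the same approach as the paper: both reduce incorrectness, for each pair consisting of a $0$-sink $\alpha$ and a literal $\ell\in C_\alpha$, to reachability of $\alpha$ from the source after deleting the outgoing edges of the relevant decision nodes that falsify $\ell$, with the case analysis (variable untested vs.\ tested on the satisfying edge) justified via Observation~\ref{obs:patharecompatible}. Your phrasing in terms of directed reachability, rather than the paper's ``connected components,'' is if anything the slightly more precise formulation of the same test.
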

\begin{proof}
  By definition, $D$ is not correct if and only if there exists a $0$-sink
  $\alpha$, a literal $\ell$ in $C_\alpha$, an assignment $\tau$ such that
  $\tau(\ell) = 1$ and a path in $D$ from the source to $\alpha$ compatible with
  $\tau$. By Observation~\ref{obs:patharecompatible}, it is equivalent to the
  fact that there exists a path from the source to $\alpha$ that: either does
  not test the underlying variable of $\ell$ or contains the outgoing edge
  corresponding to $\tau(\ell) = 1$ when the underlying variable of $\ell$ is
  tested.

  In other words, $D$ is correct if and only if for every $0$-sink $\alpha$ and
  for every literal $\ell$ of $C_\alpha$ with variable $x$, every path from the
  source to $\alpha$ tests variable $x$ and contains the outgoing edge
  corresponding to an assignment $\tau$ such that $\tau(\ell) = 0$.

  This can be checked in polynomial time. Indeed, fix a $0$-sink $\alpha$ and a
  literal $\ell$ of $C_\alpha$. For simplicity, we assume that $\ell = x$ (the
  case $\ell = \neg x$ is completely symmetric). We have to check that every
  path from the source to $\alpha$ contains a decision node $\beta$ on variable
  $x$ and contains the outgoing edge of $\beta$ labeled with $0$. To check this,
  it is sufficient to remove all the edges labeled with $0$, going out of a
  decision node on variable $x$ and test that the source and $\alpha$ are now in
  two different connected components of $D$, which can obviously be done in
  polynomial time. Running this for every $0$-sink $\alpha$ and every literal
  $\ell$ of $C_\alpha$ gives the expected algorithm.
\end{proof}

The clauses labeling the $0$-sinks of a correct certified $\decDNNF$ naturally
connect to the function computed by $D$:

\begin{theorem}
  \label{thm:implication} Let $D$ be a correct certified $\decDNNF$ on variables
  $X$. We have $F(D) \Rightarrow D$.
\end{theorem}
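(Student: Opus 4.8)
The plan is to prove the contrapositive pointwise: since $f \Rightarrow g$ is defined as $f(\tau) \leq g(\tau)$ for every assignment $\tau$, it suffices to show that any $\tau \in \{0,1\}^X$ that does \emph{not} satisfy $D$ also does not satisfy $F(D)$.

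First I would unfold the definition of ``$\tau$ satisfies $D$''. Since $\tau$ satisfies $D$ exactly when only $1$-sinks are reached by paths compatible with $\tau$, the failure of this property means there is a source-sink path $P$ compatible with $\tau$ ending at some $0$-sink $\alpha \in Z(D)$. Next I would invoke the hypothesis that $D$ is correct: the path $P$ is precisely a path from the source to the $0$-sink $\alpha$ compatible with $\tau$, so Definition~\ref{def:cafbdd} yields $C_\alpha(\tau) = 0$. Finally, because $C_\alpha$ is one of the conjuncts of $F(D) = \bigwedge_{\beta \in Z(D)} C_\beta$, the equality $C_\alpha(\tau) = 0$ immediately gives $F(D)(\tau) = 0$, i.e. $\tau \not\models F(D)$. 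Taking the contrapositive, every $\tau \models F(D)$ satisfies $D$, which is the claim; one may also phrase the same argument directly as a proof by contradiction starting from a hypothetical $\tau \models F(D)$ not satisfying $D$.

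I do not expect a genuine obstacle here, as the statement is essentially the correctness condition of Definition~\ref{def:cafbdd} repackaged together with the definition of $F(D)$. The only points deserving a line of care are purely definitional: that ``$\tau$ does not satisfy $D$'' really produces an honest source-sink path compatible with $\tau$ reaching a $0$-sink (immediate from the path-based semantics of satisfaction), and that this path is of the shape to which the correctness clause applies verbatim, so that Observation~\ref{obs:patharecompatible} is not even needed beyond guaranteeing that the objects in Definition~\ref{def:cafbdd} are well-formed.
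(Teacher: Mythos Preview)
Your proposal is correct and matches the paper's proof essentially line for line: both argue the contrapositive, extract a $0$-sink $\alpha$ reached by a path compatible with $\tau$, apply correctness to get $C_\alpha(\tau)=0$, and conclude $\tau \not\models F(D)$ since $C_\alpha$ is a conjunct of $F(D)$.
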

\begin{proof}
  Observe that $F(D) \Rightarrow D$ if and only if for every $\tau \in
  \{0,1\}^X$, if $\tau$ does not satisfy $D$ then $\tau$ does not satisfy
  $F(D)$. Now let $\tau$ be an assignment that does not satisfy $D$. By
  definition, there exists a path compatible with $\tau$ from the source of $D$
  to a $0$-sink $\alpha$ of $D$. Since $D$ is correct, $C_\alpha(\tau) = 0$.
  Thus, $\tau$ does not satisfy $F(D)$ as $C_\alpha$ is by definition a clause
  of $F(D)$.
\end{proof}

\begin{corollary}
  \label{cor:implication} Let $F$ be CNF formula and $D$ be a correct certified
  $\decDNNF$ such that every clause of $F(D)$ are also in $F$. Then $F \Rightarrow
  D$.
\end{corollary}

\subsection{Proof systems}
\label{sec:pc}

\paragraph{Proof system for $\sSAT$.} One can use certified $\decDNNF$ to define
a proof system for $\sSAT$. The \emph{Knowledge Compilation based Proof System
  for $\sSAT$}, $\kcps(\sSAT)$ for short, is defined as follows: given a CNF
$F$, a certificate that $F$ has $k$ satisfying assignments is a correct
certified $\decDNNF$ $D$ such that:
\begin{itemize}
\item every clause of $F(D)$ are clauses of $F$,
\item $D$ computes $F$ and has $k$ satisfying assignments.
\end{itemize}

To check a certificate $D$, one has to check that $D$ is equivalent to $F$ and
has indeed $k$ satisfying assignments, which can be done in polynomial time as
follows:
\begin{itemize}
\item Check that $D$ is correct, which is tractable by
  Theorem~\ref{thm:check_correctness}.
\item Check that $D \Rightarrow F$, which is tractable by
  Corollary~\ref{cor:cnfentailement} and that every clause of $F(D)$ are clauses
  of $D$. By Corollary~\ref{cor:implication}, it means that $D \Leftrightarrow
  F$.
\item Computes the number $k$ of solutions of $D$, which is tractable by
  Theorem~\ref{thm:count_afbdd}.
\item Returns $(F,k)$.
\end{itemize}

This proof system for $\sSAT$ is particularly well-suited for the existing tools
solving $\sSAT$ in practice. Many of them such as
\texttt{sharpSAT}~\cite{thurley06} or \texttt{cachet}~\cite{sang04} are based on
a generalization of DPLL for counting which is sometimes refered as exhaustive
DPLL in the literature. It has been observed by Huang and
Darwiche~\cite{HuangD05} that these tools were implicitly constructing a
$\decDNNF$ equivalent to the input formula. Tools such as
\texttt{c2d}~\cite{OztokD15}, \texttt{D4}~\cite{LagniezM17} or
\texttt{DMC}~\cite{lagniez2018dmc} already exploit this connection and have the
option to directly output an equivalent $\decDNNF$. These solvers explore the
set of satisfying assignments by branching on variables of the formula which
correspond to a decision node and, when two variable independent components of
the formula are detected, compute the number of satisfying assignments of both
components and take the product, which corresponds to a decomposable
$\wedge$-gate. When a satisfying assignment is reached, it corresponds to a
$1$-sink. If a clause is violated by the current assignment, then it corresponds
to a $0$-sink. At this point, the solvers could also label the $0$-sink by the
violated clause which would give a correct certified $\decDNNF$.

\paragraph{Proof system for $\mSAT$.} As for $\sSAT$, one can exploit the
tractability of many problems on $\decDNNF$ to define a proof system for
$\mSAT$. Given a CNF formula $F$, ket $\tilde{F} = \bigwedge_{C \in F} C \vee
\neg s_C$ be the formula where each clause is augmented with a fresh
\emph{selector} variable. Let $S = \{s_C \mid C \in F\}$. Observe that $M(F)$ is
exactly $\max_{\tau \models \tilde{F}} |\{s \in S \mid \tau(s)=1\}|$ since if
$\tau \models \tilde{F}$ and $\tau(s_C) = 1$, then $\tau \models C$. By
Theorem~\ref{thm:count_afbdd}, if $\tilde{F}$ is represented by a $\decDNNF$
$D$, then one can solve this problem in polynomial time in $\size(D)$. The proof
system $\kcps(\mSAT)$ is defined as follows: given a CNF $F$, a certificate is a
correct certified $\decDNNF$ $D$ with clauses in $\tilde{F}$ that computes
$\tilde{F}$. The proof may be checked as before by checking both the correctness
of $D$ and the fact that $D \Leftrightarrow \tilde{F}$. However, we are not
aware of any tool solving $\mSAT$ based on this technique and thus the
implementation of such a proof system in existing tools may not be realistic. It
will still be worth comparing this proof system with the resolution for
$\mSAT$~\cite{BonetLM07}.

In general, we observe that we can use this idea to build a proof system
$\kcps(Q)$ for any tractable problem $Q$ on $\decDNNF$. This could for example
be applied to weighted versions of $\sSAT$ and $\mSAT$.

\paragraph{Combining proof systems.} An interesting feature of $\kcps$-like
proof systems is that they can be combined with other proof systems for $\UNSAT$
to be made more powerful. Indeed, one could label the $0$-sink of the $\decDNNF$
with a clause $C$ that are not originally in the initial CNF $F$ but that is
entailed by $F$, that is, $F \Rightarrow C$. In this case,
Corollary~\ref{cor:implication} would still hold. The only thing that is needed
to obtain a real proof system is that a proof that $F \Rightarrow C$ has to be
given along the correct certified $\decDNNF$, that is, a proof of
unsatisfiability of $F \wedge \neg C$. Any proof system for $\UNSAT$ may be used
here.

\paragraph{Lower bounds.} Lower bounds on the size of $\decDNNF$ representing
CNF formulas may be directly lifted to lower bounds for $\kcps(\sSAT)$ or
$\kcps(\mSAT)$. There exists families of monotone $2$-CNF that cannot be
represented as polynomial size $\decDNNF$~\cite{BeameLRS13,BovaCMS16,CapelliPhd2016}. It
directly gives the following corollary:
\begin{corollary}
  There exists a family $(F_n)_{n \in \mathbb{N}}$ of monotone $2$-CNF such that
  $F_n$ is of size $O(n)$ and any proof for $F_n$ in $\kcps(\sSAT)$ and
  $\kcps(\mSAT)$ is of size at least $2^{\Omega(n)}$.
\end{corollary}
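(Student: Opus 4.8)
The plan is to combine the known $\decDNNF$ lower bound for a specific family of monotone $2$-CNF with the observation that $\kcps(\sSAT)$ and $\kcps(\mSAT)$ certificates are themselves $\decDNNF$ computing (a simple transform of) the input formula. First I would invoke the cited result \cite{BeameLRS13,BovaCMS16,CapelliPhd2016}: there is a family $(F_n)_{n\in\mathbb{N}}$ of monotone $2$-CNF with $\size(F_n) = O(n)$ such that every $\decDNNF$ computing the Boolean function $F_n$ has size $2^{\Omega(n)}$. (Concretely these come from expander-based or grid-like incidence structures; I would just cite them rather than reconstruct the argument.) The point is that the lower bound is on \emph{any} $\decDNNF$ equivalent to $F_n$, regardless of sink labels.

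Next I would unwind the definition of a $\kcps(\sSAT)$ proof for $F_n$. Such a proof is a correct certified $\decDNNF$ $D$ whose $0$-sink clauses all belong to $F_n$ and such that $D$ computes $F_n$ and reports its model count. Stripping the clause labels off the $0$-sinks leaves an ordinary $\decDNNF$ computing the same Boolean function $F_n$; its size is exactly $\size(D)$ since labels do not change the number of edges. Hence by the lower bound above, $\size(D) \geq 2^{\Omega(n)}$, and since the proof system's verifier runs in polynomial time, the certificate $D$ itself must have size $2^{\Omega(n)}$. This gives the bound for $\kcps(\sSAT)$.

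For $\kcps(\mSAT)$ the argument is essentially the same but applied to $\tilde{F}_n = \bigwedge_{C \in F_n} (C \vee \neg s_C)$, the selector-augmented formula. A $\kcps(\mSAT)$ proof is a correct certified $\decDNNF$ $D$ with clauses in $\tilde{F}_n$ computing $\tilde{F}_n$. Here I would note the elementary fact that $\tilde{F}_n$ is a $\decDNNF$-reduction of $F_n$ in a quantitative sense: restricting all selector variables $s_C$ to $1$ in any $\decDNNF$ for $\tilde{F}_n$ yields a $\decDNNF$ for $F_n$ of no larger size (conditioning a $\decDNNF$ on a partial assignment only removes or shortcuts edges). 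Therefore a polynomial-size $\decDNNF$ for $\tilde{F}_n$ would give a polynomial-size $\decDNNF$ for $F_n$, contradicting the lower bound; so every $\decDNNF$ for $\tilde{F}_n$ — and hence every $\kcps(\mSAT)$ proof of $F_n$ — has size $2^{\Omega(n)}$.

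The only mildly delicate point, and the one I would spell out carefully, is the conditioning step for the $\mSAT$ case: I must check that setting a variable to a constant in a $\decDNNF$ preserves decomposability of $\wedge$-gates and the read-once property, and does not increase size. Decomposability is preserved because $\var$ of each subcircuit can only shrink, read-onceness is inherited trivially, and each decision node on a conditioned variable is replaced by one of its two out-edges, strictly decreasing the edge count. Everything else is bookkeeping: translating "the verifier is polynomial-time" into "the certificate is at least as large as the smallest equivalent $\decDNNF$, up to a polynomial," which is immediate from the definition of a Cook–Reckhow proof system since the function $f$ computing $(F_n, k)$ from $D$ runs in time polynomial in $\size(D)$.
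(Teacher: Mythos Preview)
Your proposal is correct and follows exactly the approach the paper intends: the paper states the corollary without proof, saying only that the cited $\decDNNF$ lower bounds ``directly give'' it, and your argument is precisely the natural unpacking of that claim. The one place where you supply a genuinely non-trivial step the paper omits is the $\kcps(\mSAT)$ case, where the certificate computes $\tilde{F}_n$ rather than $F_n$; your conditioning argument (set every selector $s_C$ to $1$ and observe that conditioning a $\decDNNF$ preserves decomposability, read-onceness, and does not increase size) is the standard and correct way to bridge that gap.
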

An interesting open question is to find CNF formulas having polynomial size
$\decDNNF$ but no small proof in $\kcps(\sSAT)$.

\section{Future work}
\label{sec:conclusion}

In this paper, we have developed techniques based on circuits used in knowledge
compilation to extend existing proof systems for tautology to harder problems.
It seems possible to implement these systems into existing tools for $\sSAT$
based on exhaustive DPLL, which would allow these tools to provide an
independently checkable certificate that their output is correct, the same way
$\SAT$-solvers returns a proof on unsatisfiable instances. It would be
interesting to see how adding the computation of this certificate to existing
solver impacts their performances. Another interesting direction would be to
compare the power of $\kcps(\mSAT)$ with the resolution for $\mSAT$ of Bonet et
al.~\cite{BonetLM07} and to see how such proof systems could be implemented in
existing tools for $\mSAT$. Finally, we think that a systematic study of other
languages used in knowledge compilation such as deterministic DNNF should be
done to see if they can be used as proof systems, by trying to add explanations
on why an assignment does not satisfy the circuit.

\newpage

\bibliography{biblio}

\end{document}